\newcommand{\score}{scoring function}
\newcommand{\Ib}{{\mathbb{I}}}
\newcommand{\Yc}{{\mathcal{Y}}}
\newcommand{\Uc}{{\mathcal{U}}}
\newcommand{\Hc}{{\mathcal{H}}}
\newcommand{\revise}[1]{\textcolor{black}{#1}}
\newtheoremstyle{mystyle}
  {}
  {}
  {\upshape}
  {}
  {\bfseries}
  {.}
  { }
  {}
\newtheorem{proposition}{Proposition}
\theoremstyle{mystyle}
\newtheorem{remark}{Remark}
\newtheorem{problem}{Problem}
\title{Learning-Based Efficient Approximation of Data-Enabled Predictive Control
}
\author{Yihan Zhou$^{1}$, Yiwen Lu$^{1}$, Zishuo Li$^{1}$, Jiaqi Yan$^{2}$, Yilin Mo$^{1}$
\thanks{This work was supported by the National Natural Science Foundation of China (NSFC) under grant number 62192752 and number 62273196.}
\thanks{$^{1}$Yihan Zhou, Yiwen Lu, Zishuo Li, and Yilin Mo are with the Department of Automation and BNRist, Tsinghua University, Beijing, 100084, China.
({\small \{zhouyh23, luyw20, lizs19\}@mails.tsinghua.edu.cn, ylmo@tsinghua.edu.cn})}
\thanks{$^{2}$Jiaqi Yan is with the Automatic Control Laboratory, ETH Zurich, Switzerland. ({\small jiayan@ethz.ch})}
}
\begin{document}

\maketitle

\begin{abstract}

Data-Enabled Predictive Control (DeePC) bypasses the need for system identification by directly leveraging raw data to formulate optimal control policies. However, the size of the optimization problem in DeePC grows linearly with respect to the data size, which prohibits its application \revise{to resource-constrained systems} due to high computational costs.
In this paper, we propose an efficient approximation of DeePC, whose size is invariant with respect to the amount of data collected, via differentiable convex programming. Specifically, the optimization problem in DeePC is decomposed into two parts: a control objective and a scoring function that evaluates the likelihood of a guessed I/O sequence, the latter of which is approximated with a size-invariant learned optimization problem. The proposed method is validated through numerical simulations on a quadruple tank system, illustrating that the learned controller can reduce the computational time of DeePC by a factor of 5 while maintaining its control performance.

\end{abstract}

\section{INTRODUCTION}

Optimal trajectory tracking has long been a fundamental problem in control systems. Classical model-based control approaches, such as Model Predictive Control (MPC), typically rely on a \emph{state-space model} of the system to predict future trajectories and determine the optimal control decision. However, defining the state space model can be a laborious part of the control design~\cite{hjalmarsson2005experiment,ogunnaike1996contemporary}, and even after it is defined, identifying its parameters from noisy observations is known to be difficult~\cite{tsiamis2021linear,li2022fundamental}.

In recent years, direct data-driven control~\cite{de2019formulas,dorfler2022bridging,zhao2024data,WANG2025111897} has emerged as a promising solution to the aforementioned challenges, with Data-Enabled Predictive Control (DeePC)~\cite{coulson2019data} being a representative algorithm encompassing this concept. A key idea of direct data-driven control is to construct a \emph{non-parametric model} of the system using a data matrix formed by Input/Output (I/O) trajectories of the system collected \textit{in advance}. According to \revise{Willems'} fundamental lemma~\cite{willems1997introduction}, any feasible I/O trajectory of a noise-free linear system must be spanned by these offline trajectories, as long as the persistency of excitation condition~\cite{willems_note_2005} is satisfied, which justifies a non-parametric representation. Based on this non-parametric model, DeePC tackles the tracking problem by solving an optimization problem in a receding-horizon manner similarly to MPC, but with the non-parametric model replacing the state-space model for prediction.

DeePC has received significant attention from both theoretical~\cite{berberich2020data,baros2022online} and empirical~\cite{carlet_data-driven_2022, elokda2021data} perspectives. However, a notable challenge associated with the practical deployment of DeePC lies in its computation demand~\cite{dorfler2022bridging, fiedler_relationship_2021,huang2019data}. Since the I/O data collected from real-world systems are typically contaminated with noise, a large amount of data is required to generate an accurate non-parametric model of the system. However, as the amount of data used for building the nonparametric model increases, width of the data matrix grows accordingly, resulting in an increase in the dimension of a decision variable in DeePC.
Therefore, the computational burden of DeePC scales with the amount of data collected, which can be prohibitive for real-time applications such as robotics and power electronics~\cite{wegner_data-enabled_2021}.

Previous research efforts have been devoted to mitigating the computational challenge of DeePC by compressing the data matrix. On the one hand, for lossless compression, a representation of predicted trajectories based on the null-space of the data matrix is proposed~\cite{carlet_data-driven_2022}. On the other hand, approximation methods have been developed to perform lossy compression of the data matrix, including those based on Singular Value Decomposition (SVD)~\cite{zhang_dimension_2023}, Proper Orthogonal Decomposition (POD)~\cite{carlet_real-time_2023}, and LQ decomposition~\cite{breschi2023data, sader_causality-informed_2023}. Overall, these methods amount to diminishing the number of effective trajectory segments in the nonparametric model, i.e., reducing the width of the data matrix.

In this paper, we propose an alternative view of DeePC: we show that the DeePC problem is equivalent to minimizing the sum of a control objective $\ell(\tau)$ and an anomaly metric $S(\tau)$, both as functions of the predicted I/O sequence $\tau$. The former encompasses the tracking error, the penalty on control effort, and the constraints on I/O signals, which is independent from the data collected. The latter maps an I/O sequence to a scalar representing how unlikely $\tau$ is a trajectory generated by the system, whose evaluation is based on the data. We dub $S(\tau)$ as a \emph{scoring function} or \emph{scoring model}, since it can serve as an oracle whose queries return a score indicating the fitness of a guessed I/O sequence to the system dynamics. Although this reformulation is equivalent to the original DeePC, it motivates us to accelerate computation of the controller via a learning-based approximation $\hat{S}(\tau)$ of the scoring function whose computational complexity is fixed even with more data collected. Specifically, we propose to parameterize the approximate scoring function as a differentiable convex program, and train its parameters via supervised learning against the true scoring function $S$.
With our proposed learning objective based on proximal operators, the minimizer of $\ell + \hat{S}$ is close to the minimizer of $\ell + S$, indicating that the approximate scoring function can be used to solve the overall control problem effectively.

An advantage of our approach is its computational merit in decoupling the scale of the control problem from the amount of data used. That is, once the form of the approximate scoring function is determined, one can train its parameters \emph{offline} with a large dataset, without affecting the number of variables or constraints in the control problem to be solved \emph{online}. This stands in contrast with the original DeePC, which, as mentioned, has a set of variables whose dimension scales with the amount of data. Hence, the learned approximate scoring function $\hat{S}$ can be viewed as a condensed representation of the system dynamics, which, combined with the control objective, enables computationally efficient control. The above intuitions are supported by numerical simulations, which show that our proposed controller can achieve a similar control performance to DeePC in less computational time.

Our contributions can be summarized as follows: i) We present a reformulation of DeePC based on the notion of scoring model, contributing to a new perspective on the data-enabled control problem. ii) We propose an approximate DeePC control law derived from a learning-based approximation of the scoring function. iii) We demonstrate through numerical simulations that the proposed method can significantly reduce the computational time required for DeePC, with only a minor compromise in control performance.

The rest of the paper is organized as follows. Section II provides a brief overview of the DeePC algorithm and its computational restriction. Section III presents our learning-based approximation of DeePC. Section IV presents the simulation results. Section V concludes the paper.

\emph{Notations:}
The set of all integers is denoted by $\mathbb{Z}$, with subsets of positive integers and non-negative integers represented as $\mathbb{Z}_{>0}$ and $\mathbb{Z}_{\geq0}$, respectively. In the context of matrices and vectors, $X^{\top}$ denotes the transpose of matrix $X$, and $X^{\dagger}$ represents the Moore-Penrose pseudo-inverse of matrix $X$. The operator $\operatorname{diag}(x_1, \ldots, x_n)$ creates a diagonal matrix with elements $x_1, \ldots, x_n$ on its diagonal. The indicator function $\mathbb{I}_{\mathcal{X}}(x)$ equals 0 if $x \in \mathcal{X}$ and $+\infty$ otherwise. The subdifferential and proximal operator of a function $f$ are denoted by $\partial f$ and $\operatorname{Prox}_f$, respectively. The operator $\operatorname{col}$ is used to denote the column stack of vectors. $I$ denotes the identity matrix of appropriate dimension or the identity operator according to the context. The $p$-norm of a vector or matrix or operator is denoted by $\|\cdot\|_p$.
The quadratic form $\|x\|_Q^2 = x^{\top}Qx$ signifies the weighted norm squared of vector $x$.
The symbol $\otimes$ represents the Kronecker product. The symbol $\mathcal{N}(\mu, \Sigma)$ denotes a Gaussian distribution with mean $\mu$ and covariance $\Sigma$.
\section{PRELIMINARIES}

\subsection{Problem Formulation}

In this paper, we consider a discrete-time Linear Time-Invariant (LTI) system with state-space representation
\begin{subequations}\label{eq:state_space}
\begin{align}
   &x(t+1)=Ax(t)+Bu(t)+w(t), \\
   &y(t)=Cx(t)+v(t),
\end{align}
\end{subequations}
where $A \in \mathbb{R}^{n \times n}, B \in \mathbb{R}^{n \times m}, C \in \mathbb{R}^{p \times n}$ are state-space matrices, $x(t)\in \mathbb{R}^n, u(t)\in \mathbb{R}^m, y(t)\in \mathbb{R}^p$ are state, input, and output vectors at time $t \in \mathbb{Z}_{\geq 0}$, respectively, and $w(t) \stackrel{\text{i.i.d.}}{\sim} \mathcal{N}(0, \Sigma_w)$ and $v(t) \stackrel{\text{i.i.d.}}{\sim} \mathcal{N}(0, \Sigma_v)$ are process and measurement noises. We assume that $(A, B)$ is controllable and $(A, C)$ is observable.

We consider an optimal tracking problem of the system~\eqref{eq:state_space}, whose objective is to minimize a quadratic function
\begin{equation}
   \limsup_{T\to\infty} \frac{1}{T} \sum_{t = 0}^{T-1} \left( \| y(t) - r \|_Q^2 + \| u(t) \|_R^2 \right),
   \label{eq:objective}
\end{equation}
subject to the constraints
\begin{equation}
   y(t) \in \mathcal{Y}, u(t) \in \mathcal{U}, \forall t \in \mathbb{Z}_{\geq 0},
   \label{eq:constraints}
\end{equation}
where $r$ is a reference signal to be tracked, $\mathcal{Y} \subseteq \mathbb{R}^p$ and $\mathcal{U} \subseteq \mathbb{R}^m$ are convex sets representing the output and input constraints respectively, and \revise{$Q \in \mathbb{R}^{p \times p} ,R \in \mathbb{R}^{m \times m}$} are symmetric positive definite weighting matrices.

\begin{remark}
   The reference signal can be a time-varying signal $r(t)$ for all of the methods in this paper, but we assume that $r$ is a constant for notational simplicity.
\end{remark}

We consider a setting where the state-space representation~\eqref{eq:state_space} is unavailable to the system operator, i.e., none of the state-space matrices $A, B, C$, the state vectors $x(t)$, or the state dimension $n$ is known. Instead, only the input and output vectors $u(t)$ and $y(t)$ are observed.

\subsection{Data-Enabled Predictive Control (DeePC)}

DeePC~\cite{coulson2021distributionally} tackles the aforementioned optimal tracking problem in a receding horizon manner, where the optimization problem described below is solved at every time step $t$. We use the convention that the index in the parentheses, e.g., $y(t)$, denotes the actual value of the variable at time $t$., while the index in subscript, e.g., $y_k$, denotes the predicted value of the variable at the $k$-th step in the \revise{prediction} horizon.

\begin{problem}[DeePC Problem at Time Step $t$]  \label{eq:DeePC}
   \begin{align}
         & \underset{u_{0:\!N\!-\!1}, y_{0:\!N\!-\!1}, g, \sigma_y}\min  \sum_{k=0}^{N-1}\left(\left\|y_k-r\right\|_Q^2+\left\|u_k\right\|_R^2\right) +\notag \\
         & \qquad  \lambda_{g1}\|g\|_1 + \lambda_{g2}\|g\|_2^2  + \lambda_{y1}\left\|\sigma_y\right\|_1 + \lambda_{y2}\left\|\sigma_y\right\|_2^2 \notag \\[2ex]
         & \mathrm { s.t. }
      \begin{array}{r}
      \textcolor{black}{\scriptstyle mT_{\mathrm{ini}}\{} \\
      \textcolor{black}{\scriptstyle mN\{} \\
      \textcolor{black}{\scriptstyle pT_{\mathrm{ini}}\{} \\
      \textcolor{black}{\scriptstyle pN\{}
      \end{array}
         \underbrace{
         \left[\begin{array}{c}
         U_{\mathrm{p}} \\
         U_{\mathrm{f}} \\
         Y_{\mathrm{p}} \\
         Y_{\mathrm{f}}
         \end{array}\right]
         }_{\scriptstyle M}
         g=\left[\begin{array}{c}
         u_{\mathrm{ini}} \\
         u_{0:N-1} \\
         y_{\mathrm{ini}} \\
         y_{0:N-1}
         \end{array}\right]+\left[\begin{array}{c}
         0 \\
         0 \\
         \sigma_y \\
         0
         \end{array}\right], \notag \\
         & \mkern90mu u_{k} \in \mathcal{U}, \forall k \in\{0, \ldots, N-1\}, \label{eq:def_Uc}\\
         & \mkern90mu y_{k} \in \mathcal{Y}, \forall k \in\{0, \ldots, N-1\},\label{eq:def_Yc}
   \end{align}
   where the new notations are defined and explained as follows: \begin{itemize}
      \item $N \in \mathbb{Z}_{>0}$ is the prediction horizon, $M \in \mathbb{Z}_{>0}$ is the number of trajectory segments collected offline, and $T_{\mathrm{ini}} \geq l$ is the length of the initialization sequence, where $l$ is the lag~\cite[Section 7.2]{markovsky2006exact} of the system.
      \item The matrix $\mathcal{H}\triangleq \begin{bmatrix}
         U_p^\top & U_f^\top & Y_p^\top & Y_f^\top
      \end{bmatrix}^\top$ is the data matrix, each column of which is formed by concatenating the Input/Output (I/O) signals collected offline at $T_{\mathrm{ini}} + N$ consecutive time steps. We call such a column a \emph{trajectory segment}. The data matrix has $M$ columns, and is partitioned into row block matrices $U_{\mathrm{p}}, U_{\mathrm{f}}, Y_{\mathrm{p}}, Y_{\mathrm{f}}$, with number of rows $mT_{\mathrm{ini}}, mN, pT_{\mathrm{ini}}, pN$ respectively, where the subscripts $\mathrm{p}$ and $\mathrm{f}$ stand for past and future, respectively. It is assumed that the matrix $\begin{bmatrix}
         U_{\mathrm{p}}^\top & U_{\mathrm{f}}^\top
      \end{bmatrix}^\top$ is of full row rank.
      \item $u_{\mathrm{ini}} \in \mathbb{R}^{mT_{\mathrm{ini}}}$ and $y_{\mathrm{ini}} \in \mathbb{R}^{pT_{\mathrm{ini}}}$ are column vectors formed by concatenating the I/O signals of the past $T_{\mathrm{ini}}$ time steps, which are inputs to the DeePC problem.
      \item $u_{0:\!N\!-\!1} \in \mathbb{R}^{mN}, y_{0:\!N\!-\!1} \in \mathbb{R}^{pN}$ stand for the predicted I/O sequence over the horizon $N$, which are optimization variables of the DeePC problem. After solving the DeePC problem, $u_0^\star$ from the optimal solution is applied to the system, i.e., $u(t) = u_0^\star$.
      \item $g \in \mathbb{R}^{M}$ and $\sigma_y \in \mathbb{R}^{pN}$ are auxiliary optimization variables of the DeePC problem, which represent the coefficients for spanning the predicted I/O trajectory sequence from the data matrix and the slack variable for the output constraints, respectively. The scalar parameters $\lambda_{g1}, \lambda_{g2}, \lambda_{y1}, \lambda_{y2} \in \mathbb{R}_{\geq 0}$ are predefined coefficients. When brought together, $\lambda_{g1}\|g\|_1 + \lambda_{g2}\|g\|_2^2  + \lambda_{y1}\left\|\sigma_y\right\|_1 + \lambda_{y2}\left\|\sigma_y\right\|_2^2$ is viewed as a regularizer essential for the robustness of the controller in the presence of noises~\cite{dorfler2022bridging}.
      In different versions of DeePC, the regularizer form has been chosen to be $l_1$-norm~\cite{coulson2019data,wegner_data-enabled_2021}, squared $l_2$-norm~\cite{zhang_dimension_2023,carlet_real-time_2023}, or the hybrid of both~\cite{dorfler2022bridging}. Therefore, we adopt a general formulation that can cover the above cases.
   \end{itemize}
\end{problem}

Since the collected I/O data are contaminated with noise, a large amount of data is required for an accurate non-parametric representation of the system dynamics. However, the dimension $M$ of the auxiliary optimization variable $g$ grows linearly with respect to the amount of data, which poses a computational challenge to the online optimization of DeePC.
\section{COMPUTATIONALLY EFFICIENT DEEPC VIA LEARNING-BASED APPROXIMATION}

We propose a computationally efficient approximate DeePC by introducing the notion of the scoring function.
We first show that the DeePC objective can be viewed as minimizing the sum of the control cost and the score, and the evaluation of the latter is costly due to the scale of data matrix.
Therefore, we propose to approximate the scoring function with a reduced-order one via differentiable convex programming.
The parameters of the reduced scoring function are learned offline.
Finally, an approximate control law is formulated as minimizing the sum of the control cost and the learned approximate scoring function.
The overall framework is illustrated in Fig~\ref{fig:framework}.

\begin{figure}[!htbp]
   \centering
   \begin{tikzpicture}[>=Stealth,  squiggle/.style={-latex, line join=round, decorate, decoration={zigzag, segment length=4, amplitude=.9, post=lineto, post length=2pt}}]
      \node (min1) {$\mathrm{minimize}$};
      \node[right = 0.15 cm of min1, blue] (ell1) {$\ell(\tau)$};
      \node[above = 0.3 cm of ell1, blue] (cost) {Control Cost};
      \node[right = 0.3 cm of ell1] (plus1) {+};
      \node[right = 0.3 cm of plus1, red] (S) {$S(\tau)$};
      \node[above = 0.3 cm of S, red] (score) {Score};
      \node[right = of S] (DeePC) {DeePC};
      \node[below = of min1] (min2) {$\mathrm{minimize}$};
      \node[right = 0.15 cm of min2, blue] (ell2) {$\ell(\tau)$};
      \node[right = 0.3 cm of ell2] (plus2) {+};
      \node[right = 0.3 cm of plus2, red] (Shat) {$\hat{S}(\tau)$};
      \node[right = of Shat] (Ours) {Ours};
      \draw[blue] ($(ell1.south)-(0.03,0.25)$) -- ($(ell2.north)-(0.03,-0.25)$); 
      \draw[blue] ($(ell1.south)+(0.03,-0.25)$) -- ($(ell2.north)+(0.03,0.25)$); 
      \draw[->] (S) -- (DeePC);
      \draw[->, squiggle, red] (S) -- (Shat) node[midway, right, red] {Alg.~\ref{alg:score_learning}};
      \draw[->] (Shat) -- (Ours);
      \node[draw, dashed, fit=(ell1) (S) (min1)] (box) {};
      \node[draw, dashed, fit=(ell2) (Shat) (min2)] (box) {};
    \end{tikzpicture}
   \caption{Illustration of the overall framework. The optimization objective of DeePC can be split into two parts: the control cost $\ell(\tau)$ and the score function $S(\tau)$. Subsequently, $S(\tau)$ is approximated by $\hat{S}(\tau)$, which is learned offline. The learning-based efficient approximation of DeePC is formulated as minimizing the sum of the control cost $\ell(\tau)$ and the learned approximate score $\hat{S}(\tau)$.}
   \label{fig:framework}
   \vspace{-18pt}
\end{figure}
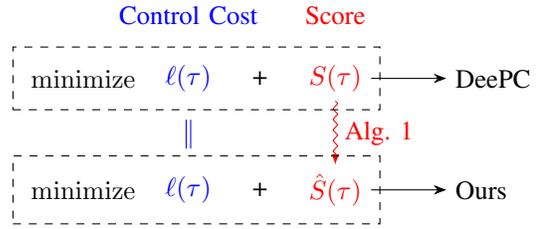

\subsection{DeePC through the Lens of Scoring Function}

Let $L=T_{\mathrm{ini}}+N$ be the length of the I/O trajectory sequence in the data matrix. We denote $$\tau \triangleq \operatorname{col}(\mathbf{u}, \mathbf{y})$$ as the predicted I/O trajectory sequence \revise{of length $L$}, where $\mathbf{u} \in \mathbb{R}^{mL}$ and $\mathbf{y} \in \mathbb{R}^{pL}$ are the predicted input and output sequences, respectively.

We propose to reformulate Problem~\ref{eq:DeePC} and separate the objective into two parts:
\begin{equation}
   \min _{\tau} \ell(\tau)+S(\tau),
\label{eq:DeePC_reformulated}
\end{equation}
where the objective functions are defined as
\revise{
\begin{equation}
	\ell(\tau)  = \|\mathbf{y} - \mathbf{r}\|_{\mathbf{Q}}^{2}+\|\mathbf{u}\|_{\mathbf{R}}^2 + \Ib_{\{=\tau_\mathrm{ini}\}}(E_{\mathrm{ini}}\tau) + \Ib_{\mathcal{U}^N \times \mathcal{Y}^N}(E_N\tau),
   \notag
   \vspace{-5pt}
\end{equation}
\begin{align}
   &S(\tau)  = \min _{g, \sigma_y}\lambda_{g1}\|g\|_1+\lambda_{g2}\|g\|_2^2+\lambda_{y1}\|\sigma_y\|_1+\lambda_{y2}\|\sigma_y\|_2^2 ,\nonumber \\
   & \qquad  \text{s.t. } \Hc g + E_{\sigma} \sigma_y = \tau, \label{eq:score}
\end{align}
and $\mathbf{r}\triangleq r\otimes \mathbf{1}_L$, $\mathbf{Q}\triangleq Q\otimes I_L$, $\mathbf{R}\triangleq R \otimes I_L$,
$E_{\mathrm{ini}}\triangleq\begin{bmatrix} I_{mT_{\mathrm{ini}}} & 0_{mT_{\mathrm{ini}}\times mN} & 0_{mT_{\mathrm{ini}}\times pT_{\mathrm{ini}}} & 0_{mT_{\mathrm{ini}}\times pN} \\ 0_{pT_{\mathrm{ini}}\times mT_{\mathrm{ini}}} & 0_{pT_{\mathrm{ini}}\times mN} & I_{pT_{\mathrm{ini}}} & 0_{pT_{\mathrm{ini}}\times pN} \end{bmatrix}$,
$E_N \triangleq\begin{bmatrix} 0_{mN\times mT_{\mathrm{ini}}} & I_{mN} & 0_{mN \times pT_{\mathrm{ini}}} & 0_{mN\times pN} \\ 0_{pN\times mT_{\mathrm{ini}}} & 0_{pN\times mN} &  0_{pN \times pT_{\mathrm{ini}}} & I_{pN} \end{bmatrix} $ are selection matrices that extract corresponding parts of the I/O sequence from $\tau$,
and $E_{\sigma} \triangleq \begin{bmatrix} 0_{pT_\mathrm{ini} \times mL} & -I_{pT_\mathrm{ini}} & 0_{pT_\mathrm{ini} \times pN} \end{bmatrix}^\top$.
In the notation of $\Ib_{\Uc^N\times \Yc^N}(\tau)$, ``$\times$" denotes Cartesian power and $\mathcal{U}^{N}$ denotes $N$-th Cartesian power of a set $\mathcal{U}$.
}

Here, $\ell(\cdot)$ and $S(\cdot)$ fulfill two orthogonal roles:

\begin{itemize}
\item $\ell(\cdot)$ represents the cost function of the control problem, which is distinct from and unaffected by the system dynamics.
\item $S(\cdot)$ is referred to as the \score. For any I/O trajectory $\tau
$, it evaluates the fitness of $\tau$ to the trajectory segments collected in the data matrix. It depends only on the system and is not influenced by any specific costs or constraints.
\end{itemize}

Note that $g, \sigma_y$ are auxiliary variables required for the evaluation of the scoring function, not part of the control policy,
and the dimension of $g$ grows with the number of trajectory segments in the data matrix.
Therefore, we desire to accelerate the evaluation of $S(\tau)$ using a learning-based approach which approximates the \score\ with fewer auxiliary variables.
We denote the approximate scoring function as $\hat{S}(\tau)$, then the overall control problem becomes:
\begin{equation}
   \min_{\tau} \ell(\tau)+\hat{S}(\tau).
   \label{eq:DeePC_approx}
\end{equation}

\subsection{Learning Objective}

Since the approximation goal is to make control input, i.e., the optimal solution, of the learned problem $\min\,l(\tau) + \hat{S}(\tau)$ close to that of $\min\,l(\tau) + S(\tau)$, we set the learning target to be minimizing the error between the \emph{proximal operators} of the approximate and true scoring functions $\hat{S}$ and $S$. 

The proximal operator~\cite{parikh2014proximal} of $\hat{S}$ is defined as:
\begin{equation}
   \operatorname{Prox}_{\hat{S}}(\tau)=\arg \min _{\hat\tau} \hat{S}(\hat\tau)+\frac{1}{2}\|\hat\tau-\tau\|^2.
   \label{eq:prox_def}
\end{equation}
Consequently, the optimality condition of~\eqref{eq:DeePC_approx}, i.e, $0 \in (\partial \ell + \partial \hat{S})(\hat{\tau}^\star)$, can be equivalently expressed with the proximal operators of $\ell$ and $\hat{S}$ using operator splitting methods. Therefore, the optimal solution $\hat{\tau}^\star$ to~\eqref{eq:DeePC_approx} can be determined once $\operatorname{Prox}_{\hat{S}}$ is given.

\subsection{Approximator Form: Differentiable Convex Program}

Given the learning target $\operatorname{Prox}_S$, the subsequent step is to select a parameterized family of $\hat{S}$ that satisfies the following conditions:
i) $\hat{S}(\tau)$ is capable of approximately representing the true \score\ $S(\tau)$;
ii) $\hat{S}$ is a convex function of $\tau$, and $\operatorname{Prox}_{\hat{S}}$ can be easily parameterized; iii) the control problem $\min_{\tau} \ell(\tau)+\hat{S}(\tau)$ should be computationally tractable.
Considering the above requirements, we adopt a differentiable convex programming approach to represent $\hat{S}$.
Specially, we adopt the following form:
\begin{subequations}
   \begin{align}
   \hat{S}(\tau)=
   \min_{z \in \mathbb{R}^{n_z}} & \|\operatorname{diag}(d_1)z\|_1 + \|\operatorname{diag}(d_2)z\|_2^2 \\
   \mathrm{s.t.}\ & Gz + W\tau = 0.
   \end{align}
\label{eq:score_approx}
\end{subequations}
Here, $\hat{S}$ is a mapping from an I/O sequence $\tau$ to a scalar standing for the optimal value of a convex optimization problem with $n_z$ variables and $m_z$ constraints, whose decision variable is $z$ and whose parameters include $\tau$ as well as learnable parameters $d_1 \in \mathbb{R}^{n_z}, d_2\in \mathbb{R}^{n_z}, G \in \mathbb{R}^{m_z \times n_z}, W \in \mathbb{R}^{m_z \times L(m+p)}$. To train the differentiable convex program means to update the parameters $d_1, d_2, G, W$ via gradient-based methods, such that the learning target ($\operatorname{Prox}_{\hat{S}} \approx \operatorname{Prox}_S$ in our case) is achieved. The size of the convex program, i.e., $n_z$ and $m_z$, can be arbitrarily chosen according to a trade-off between the approximation accuracy and the computational cost, but once fixed, it is invariant against the amount of data used for training.

We first show that the chosen form of $\hat{S}$ has adequate representational power to approximate $S$:

\begin{proposition}
   With sufficiently large $n_z$ and $m_z$, the true $S(\tau)$ is representable by \eqref{eq:score_approx}.
\end{proposition}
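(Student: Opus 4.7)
The plan is to exhibit an explicit construction of the parameters $n_z$, $m_z$, $d_1$, $d_2$, $G$, $W$ under which the program defining $\hat{S}(\tau)$ in~\eqref{eq:score_approx} reduces to the program defining $S(\tau)$ in~\eqref{eq:score} for every $\tau$. Since the latter is already a linearly-constrained convex program with a mixed $\ell_1$-plus-squared-$\ell_2$ objective, the natural strategy is simply to encode its two decision variables $g$ and $\sigma_y$ into the single variable $z$ of the approximator and then match objective terms and constraint coefficients one-to-one.

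Concretely, I would set $n_z = M + p T_{\mathrm{ini}}$ and identify $z = \operatorname{col}(g,\sigma_y)$. For the objective, choosing $d_1 = \operatorname{col}(\lambda_{g1}\mathbf{1}_M,\,\lambda_{y1}\mathbf{1}_{pT_{\mathrm{ini}}})$ and $d_2 = \operatorname{col}(\sqrt{\lambda_{g2}}\,\mathbf{1}_M,\,\sqrt{\lambda_{y2}}\,\mathbf{1}_{pT_{\mathrm{ini}}})$ yields, by direct expansion, $\|\diag(d_1)z\|_1 = \lambda_{g1}\|g\|_1 + \lambda_{y1}\|\sigma_y\|_1$ and $\|\diag(d_2)z\|_2^2 = \lambda_{g2}\|g\|_2^2 + \lambda_{y2}\|\sigma_y\|_2^2$, so the two objectives coincide termwise. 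For the constraint, I would take $m_z = L(m+p)$, $G = [\,\Hc \;\; E_\sigma\,]$, and $W = -I_{L(m+p)}$, so that $Gz + W\tau = 0$ is literally the same equation as $\Hc g + E_\sigma \sigma_y = \tau$.

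With these choices the feasible sets and objective values of the two problems are identical for every $\tau$, and therefore their optimal values agree, giving $\hat{S}(\tau) = S(\tau)$. There is no conceptual obstacle here; the only thing to check is dimension bookkeeping, namely that $\Hc \in \mathbb{R}^{L(m+p)\times M}$ and $E_\sigma \in \mathbb{R}^{L(m+p)\times pT_{\mathrm{ini}}}$ stack horizontally into a $G$ of the declared shape. To address the ``sufficiently large'' qualifier, any strictly larger $n_z$ or $m_z$ can be absorbed by padding $z$ with coordinates whose entries in $d_1,d_2$ and columns in $G$ are zero, and by appending trivial rows $0 = 0$ to the linear constraint, which leaves the optimal value unchanged. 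Hence the construction establishes representability for all $n_z \geq M + pT_{\mathrm{ini}}$ and $m_z \geq L(m+p)$.
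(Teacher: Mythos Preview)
Your proposal is correct and matches the paper's own proof essentially line-for-line: the paper also sets $z=\operatorname{col}(g,\sigma_y)$, takes $d_1,d_2$ to be the corresponding constant blocks, and chooses $G=[\,\Hc\;\;E_\sigma\,]$, $W=-I_{(m+p)L}$, concluding that $n_z\geq M+pT_{\mathrm{ini}}$ and $m_z\geq (m+p)L$ suffice. Your added remark about padding for strictly larger $n_z,m_z$ is a welcome clarification that the paper leaves implicit.
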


\begin{proof}
   Let $d_1 = [\underbrace{\lambda_{g1}, \ldots, \lambda_{g1}}_{M\ \text{times}}, \underbrace{\lambda_{\sigma_{y1}}, \ldots, \lambda_{\sigma_{y1}}}_{pT_\mathrm{ini}\ \text{times}}]^\top$,
   $d_2 = [\underbrace{\sqrt{\lambda_{g2}}, \ldots, \sqrt{\lambda_{g2}}}_{M\ \text{times}}, \underbrace{\sqrt{\lambda_{\sigma_{y2}}}, \ldots, \sqrt{\lambda_{\sigma_{y2}}}}_{pT_\mathrm{ini}\ \text{ times}}]^\top$, \revise{$G = \begin{bmatrix} \Hc & E_{\sigma} \end{bmatrix}$, $W = -I_{(m+p)L}$}, then $\hat{S}(\tau)$ defined in~\eqref{eq:score_approx} is equivalent to $S(\tau)$. \revise{Therefore, the true $S(\tau)$ can be representable by \eqref{eq:score_approx} with $n_z \geq M + pT_\mathrm{ini}$ and $m_z \geq (m+p)L$.}
\end{proof}

We next show how to evaluate and learn the proximal operator of the function $\hat{S}$ given in~\eqref{eq:score_approx}. By definition~\eqref{eq:prox_def} of proximal operator, we have:
   \begin{align}
   &(z^\star, \hat{\tau}^\star) = \arg\min_{z, \hat\tau} \|\operatorname{diag}(d_1)z\|_1 + \|\operatorname{diag}(d_2)z\|_2^2+ \nonumber \\
   &\mkern100mu  \|\hat\tau - \tau\|^2/2,\quad  \mathrm{s.t.} \; Gz + W\hat\tau = 0. \nonumber \\
   & \operatorname{Prox}_{\hat{S}}(\tau) = \hat\tau^\star.
   \label{eq:prox}
   \end{align}

To evaluate~\eqref{eq:prox} and find the gradient of $\tau^\star$ with respect to the learnable parameters $d_1, d_2, G, W$, we adopt an algorithm unrolling~\cite{gregor2010learning,monga2021algorithm} approach, which has been proved efficient for controller learning~\cite{lu2023bridging}.
In particular, we unroll the Douglas-Rachford Splitting (DRS)~\cite{eckstein1992douglas}, which, when applied to the problem~\eqref{eq:prox}, yields the following iterations:
\begin{subequations}
   \begin{align}
      &\left[\begin{array}{c}z^{k+\frac{1}{2}} \\ \hat\tau^{k+\frac{1}{2}}\end{array}\right] = \left[\begin{array}{c}\operatorname{sh}_{d_1, d_2}(\xi^k)\\ \frac{{\tau} + \eta^k}{2}\end{array}\right], \\
      &\left[\begin{array}{c}z^{k+1} \\ \hat\tau^{k+1}\end{array}\right] = \left(I - \tilde{G}^{\dagger}\tilde{G}\right)\left[\begin{array}{c}2z^{k+\frac{1}{2}}-\xi^k \\ 2\hat\tau^{k+\frac{1}{2}}-\eta^k\end{array}\right], \\
      &\left[\begin{array}{c}\xi^{k+1} \\ \eta^{k+1}\end{array}\right] = \left[\begin{array}{c}\xi^k+z^{k+1}-z^{k+\frac{1}{2}}\\ \eta^k + \hat\tau^{k+1}-\hat\tau^{k+\frac{1}{2}}\end{array}\right],
   \end{align}
   \label{eq:DRS}
\end{subequations}
where $\xi, \eta$ are auxiliary variables, $\tilde{G} = \begin{bmatrix} G & W \end{bmatrix}$, and the soft-thresholding operator $\operatorname{sh}_{d_1, d_2}(x)$ is defined as:
\begin{equation}
   \operatorname{sh}_{d_1, d_2}(x)_i= \begin{cases}\frac{x_i-\left|d_{1 i}\right|}{1+2 \left(d_{2 i}\right)^2} & \text { if } x_i-\left|d_{1 i}\right|>0, \\
   \frac{x_i+\left|d_{1 i}\right|}{1+2 \left(d_{2 i}\right)^2} & \text { if } x_i+\left|d_{1 i}\right|<0, \\
   0 & \text { otherwise }.\end{cases}
   \label{eq:sh}
\end{equation}
Here, $d_{1 i}$ and $d_{2 i}$ are the $i$-th elements of $d_1$ and $d_2$, respectively.

The unrolled iterations resemble a deep neural network that interleaves affine transformations with soft-thresholding nonlinearities, whose weights are functions of the learnable parameters $d_1, d_2, G, W$. This resemblance, visualized in Fig~\ref{fig:score_approx}, facilitates the training of the learnable parameters using established deep learning frameworks.

\begin{figure}[!htbp]
   \centering
   \includegraphics{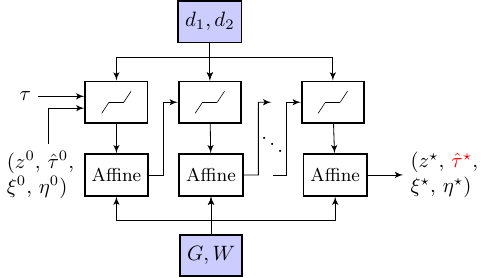}
   \caption{Illustration of evaluating $\operatorname{Prox}_{\hat{S}}$ with unrolled DRS iterations.  The input is an I/O sequence $\tau$ and the initial values of $z, \tau, \xi, \eta$, and the output is $\operatorname{Prox}_{\hat{S}}({\tau})$, denoted as $\hat\tau^{\star}$. The learnable parameters are $d_1, d_2, G, W$. \includegraphics*[height=0.2cm]{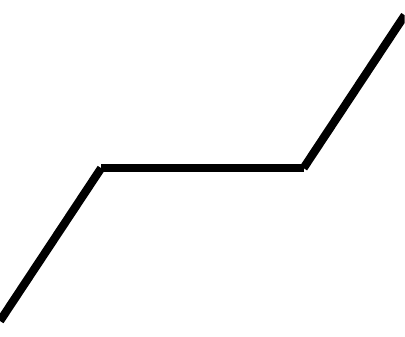} is the soft-thresholding operator~\eqref{eq:sh}, and the Affine block is the composition of all affine transformations in each iteration~\eqref{eq:DRS}.}
   \label{fig:score_approx}
   \vspace{-20pt}
\end{figure}

\subsection{Learning Algorithm}

Given a data matrix $\mathcal{H}$ consisting of I/O trajectory sequences, we first collect a dataset $\mathcal{D}$ of I/O trajectory sequences by linear combination of the trajectories in $\mathcal{H}$ and adding random noise to the trajectories to increase the diversity of the dataset. Then, we compute the ground truth $\operatorname{Prox}_{S}(\tau)$ for each trajectory $\tau$ in $\mathcal{D}$. Finally, we learn the parameters $\theta = (d_1, d_2, G, W$) of the approximate scoring function $\hat{S}$ by minimizing the mean squared error between $\operatorname{Prox}_{\hat{S}}$ and the ground truth $\operatorname{Prox}_{S}$ across all entries in the dataset $\mathcal{D}$.

The learning procedure is summarized in Algorithm~\ref{alg:score_learning}.
\vspace{-10pt}

\begin{algorithm} 
   \caption{Framework for Learning the Scoring Function}
   \label{alg:score_learning}
   \DontPrintSemicolon
   \LinesNumbered
   \KwIn{Data matrix $\mathcal{H}$ consisting of I/O trajectory sequences, regularization parameters $\lambda_{g1}, \lambda_{g2}, \lambda_{y1}, \lambda_{y2}$}
   \KwOut{Approximate scoring function parameters $\theta = (d_1, d_2, G, W)$}
   Collect a dataset $\mathcal{D}$ of I/O trajectory sequences by applying different noises to the trajectories in $\mathcal{H}$.\;
   Compute $\operatorname{Prox}_{S}(\tau)$ for each trajectory $\tau$ in $\mathcal{D}$.\;
   \revise{Randomly} initialize the parameters $\theta = (d_1, d_2, G, W)$.\;
   \For{epoch $= 1,2,\dots$}{
     Solve Problem~\ref{eq:prox} to obtain $\operatorname{Prox}_{\hat{S}}(\tau)$ for each trajectory $\tau$ in $\mathcal{D}$.\;
     Compute the mean squared error denoted by $\ell_{mse}(\theta; \mathcal{D}) = \sum_{\tau \in \mathcal{D}}\|\operatorname{Prox}_{S}(\tau)-\operatorname{Prox}_{\hat{S}}(\tau)\|_2^2$.\;
     Update $\theta$ according to \revise{the gradient of} the loss function $\ell_{mse}(\theta; \mathcal{D})$.\;
   }
\end{algorithm} 
\vspace{-10pt}

Note that learning the approximate scoring function is an offline process, and the learned parameters $\theta$ are used to substitute the original scoring function $S(\tau)$ to solve the overall control problem~\eqref{eq:DeePC_approx} online.

\section{SIMULATION}

In this section, we present the simulation results to demonstrate the effectiveness of our approach.
All computations are performed on a computer with an AMD Ryzen 9 5900X Processor clocked at 3.7GHz.
All controllers are implemented using the MOSEK optimizer.
Our code is available at \url{https://github.com/zhou-yh19/redpc}.

Consider a quadruple-tank system~\cite{johansson2000quadruple} whose simulation employs the same linearized system equations and system parameters as the LTI system used in~\cite{zhang_dimension_2023}:
\begin{align*}
&A = \begin{bmatrix}
   0.921 & 0 & 0.041 & 0 \\
   0 & 0.918 & 0 & 0.033 \\
   0 & 0 & 0.924 & 0 \\
   0 & 0 & 0 & 0.937
\end{bmatrix}, \\
&B = \begin{bmatrix}
   0.017 & 0.001 \\
   0.001 & 0.023 \\
   0 & 0.061 \\
   0.072 & 0
\end{bmatrix},
C = \begin{bmatrix}
   1 & 0 & 0 & 0 \\
   0 & 1 & 0 & 0
\end{bmatrix}.
\end{align*}

The system is subject to process noise $w(t) \sim \mathcal{N}(0, 0.01I_4)$ and measurement noise $v(t) \sim \mathcal{N}(0, 0.1I_2)$.

We follow the same setting as in~\cite{zhang_dimension_2023}, with the control cost matrices $Q = 35\cdot I_2$ and $R = 10^{-4}I_2$, the control input and output constraints $\mathcal{U} = \mathcal{Y} = [-2, 2]^2$, the setpoint $r(t) = \begin{bmatrix} 0.65,  0.77 \end{bmatrix}^T$, the prediction horizon $N = 20$, and the initial sequence length $T_\mathrm{ini} = 10$.

\emph{Data Collection:} 
\revise{The dataset is collected by simulating the system with random inputs generated by sampling from a uniform distribution over $\mathcal{U}$.
The corresponding outputs are recorded during the simulation.
In total, we collect 1500 I/O data points, which are subsequently organized into the data matrix \(\mathcal{H}\) in the form of Hankel matrix.}

We choose the regularization parameters of DeePC as $\lambda_{g1} = 1$, $\lambda_{g2} = 100$, $\lambda_{y1} = 100$, $\lambda_{y2} = 100000$.
And the approximate scoring function is learned with \revise{$m_z = 55, n_z=110$ and DRS iteration number 20}. The training procedure is completed within 260s on a single NVIDIA GeForce RTX 4090 GPU.
Fig~\ref{fig:trajectory} illustrates the first two states of the system under the control of both the DeePC controller and our proposed approach. The results demonstrate that both controllers track the setpoint effectively.

\vspace{-5pt}
\begin{figure}[!htbp]
   \centering
   \includegraphics{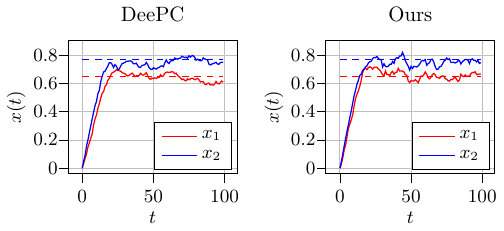}
   \caption{Tracking performance of DeePC and our approach.}
   \label{fig:trajectory}
\end{figure}
\vspace{-5pt}

In Table~\ref{tab:common}, we benchmark the control performance of our method with DeePC and MPC, detailing the sample average accumulative cost $\frac{1}{K}\sum_{k}\sum_t\left(\|y(t)-r(t)\|_Q^2+\|u(t)\|_R^2\right)$, the average and worst computation time per step.  Our approach achieves control performance comparable to MPC and DeePC, while significantly enhancing computational efficiency.

\begin{table}[!htbp]
   \vspace{-5pt}
   \begin{threeparttable}
   \centering
   \caption{Performance comparison of DeePC, MPC, and ours}\label{tab:common}
   \centering
   \setlength\tabcolsep{8pt}
   \begin{tabular}{c|c|cc}
   \toprule
   {\diagbox[dir=NW,innerwidth=2.2cm,height=2.2\line]{Metrics}{Method}} & MPC & DeePC & Ours \\
   \midrule
   Average Accumulative Cost & 305.00 & 290.68 & 296.25 \\
   Average Computation Time~(ms) & 96.35 & 153.71 & 30.41 \\
   Worst Computation Time~(ms) & 136.45 & 280.44 & 80.7 \\
   \bottomrule
   \end{tabular}
   \vspace{0.25em}
   {
   \footnotesize
   \revise{For the MPC, we assume the precise model of the system is available and update the control input with the true state at each step. To be consistent with the DeePC formulation, the MPC does not incorporate a terminal cost and the horizon is set to 20.}
   }
   \end{threeparttable}
\end{table}

\revise{Furthermore, we train the approximate scoring function of different sizes with the data matrix $\mathcal{H}$ and evaluate their performance on the control task. For comparison, we truncate the columns of $\mathcal{H}$ to adjust the number of I/O sequences used in DeePC. The results are shown in Fig.~\ref{fig:performance_curve}.
}
\begin{figure}[!htbp]
   \centering
   \includegraphics[scale=0.9]{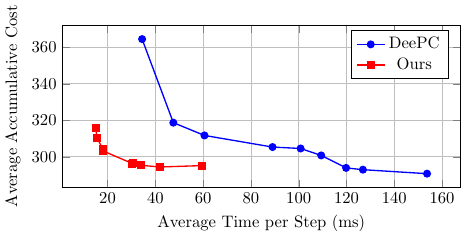}
   \caption{Comparison of the cost-time curves between DeePC and our approach. The average computation time for DeePC is influenced by the number of I/O sequences in the data matrix, whereas our approach's computation time primarily depends on the size of the learned approximate scoring function.}
   \label{fig:performance_curve}
   \vspace{-18pt}
\end{figure}

For DeePC, an increase in the number of I/O sequences in the data matrix results in better control performance but also longer computation time.
For our approach, the computation time is mainly determined by the size of the learned approximate scoring function, and there is also trade-off between control performance and computation time.
Nevertheless, it can be observed that under the same computational time constraints, our method achieves superior control performance compared to DeePC, and when aiming for comparable control quality, our approach significantly reduces the computation time. This demonstrates the effectiveness of our approach in enhancing the computational efficiency of DeePC while maintaining control performance. 
\section{CONCLUSIONS}
In this paper, we propose a computationally efficient approximate DeePC by introducing the notion of the scoring function and replacing it with a learned approximate one via differentiable convex programming. The parameters of the reduced scoring function are learned offline, and the control law is formulated as minimizing the sum of the control cost and the learned approximate scoring function. Simulation results demonstrate our approach can achieve a comparable tracking performance to DeePC while significantly reducing the computation time. In the future, we will apply our approach to the real-world control problems and investigate the generalization ability of the learned approximate scoring function.
\vspace{-1pt}
\bibliographystyle{IEEEtran}
\bibliography{ref}

\begin{thebibliography}{10}
\providecommand{\url}[1]{#1}
\csname url@samestyle\endcsname
\providecommand{\newblock}{\relax}
\providecommand{\bibinfo}[2]{#2}
\providecommand{\BIBentrySTDinterwordspacing}{\spaceskip=0pt\relax}
\providecommand{\BIBentryALTinterwordstretchfactor}{4}
\providecommand{\BIBentryALTinterwordspacing}{\spaceskip=\fontdimen2\font plus
\BIBentryALTinterwordstretchfactor\fontdimen3\font minus
  \fontdimen4\font\relax}
\providecommand{\BIBforeignlanguage}[2]{{%
\expandafter\ifx\csname l@#1\endcsname\relax
\typeout{** WARNING: IEEEtran.bst: No hyphenation pattern has been}%
\typeout{** loaded for the language `#1'. Using the pattern for}%
\typeout{** the default language instead.}%
\else
\language=\csname l@#1\endcsname
\fi
#2}}
\providecommand{\BIBdecl}{\relax}
\BIBdecl

\bibitem{hjalmarsson2005experiment}
H.~Hjalmarsson, ``From experiment design to closed-loop control,''
  \emph{Automatica}, vol.~41, no.~3, pp. 393--438, 2005.

\bibitem{ogunnaike1996contemporary}
B.~A. Ogunnaike, ``A contemporary industrial perspective on process control
  theory and practice,'' \emph{Annual Reviews in Control}, vol.~20, pp. 1--8,
  1996.

\bibitem{tsiamis2021linear}
A.~Tsiamis and G.~J. Pappas, ``Linear systems can be hard to learn,'' in
  \emph{2021 60th IEEE Conference on Decision and Control (CDC)}.\hskip 1em
  plus 0.5em minus 0.4em\relax IEEE, 2021, pp. 2903--2910.

\bibitem{li2022fundamental}
J.~Li, S.~Sun, and Y.~Mo, ``Fundamental limit on siso system identification,''
  in \emph{2022 IEEE 61st Conference on Decision and Control (CDC)}.\hskip 1em
  plus 0.5em minus 0.4em\relax IEEE, 2022, pp. 856--861.

\bibitem{de2019formulas}
C.~De~Persis and P.~Tesi, ``Formulas for data-driven control: Stabilization,
  optimality, and robustness,'' \emph{IEEE Transactions on Automatic Control},
  vol.~65, no.~3, pp. 909--924, 2019.

\bibitem{dorfler2022bridging}
F.~D{\"o}rfler, J.~Coulson, and I.~Markovsky, ``Bridging direct and indirect
  data-driven control formulations via regularizations and relaxations,''
  \emph{IEEE Transactions on Automatic Control}, vol.~68, no.~2, pp. 883--897,
  2022.

\bibitem{zhao2024data}
F.~Zhao, F.~D{\"o}rfler, A.~Chiuso, and K.~You, ``Data-enabled policy
  optimization for direct adaptive learning of the lqr,'' \emph{arXiv preprint
  arXiv:2401.14871}, 2024.

\bibitem{WANG2025111897}
Y.~Wang, K.~You, D.~Huang, and C.~Shang, ``Data-driven output prediction and
  control of stochastic systems: An innovation-based approach,''
  \emph{Automatica}, vol. 171, p. 111897, 2025.

\bibitem{coulson2019data}
J.~Coulson, J.~Lygeros, and F.~D{\"o}rfler, ``Data-enabled predictive control:
  In the shallows of the deepc,'' in \emph{2019 18th European Control
  Conference (ECC)}.\hskip 1em plus 0.5em minus 0.4em\relax IEEE, 2019, pp.
  307--312.

\bibitem{willems1997introduction}
J.~C. Willems and J.~W. Polderman, \emph{Introduction to mathematical systems
  theory: a behavioral approach}.\hskip 1em plus 0.5em minus 0.4em\relax
  Springer Science \& Business Media, 1997, vol.~26.

\bibitem{willems_note_2005}
J.~C. Willems, P.~Rapisarda, I.~Markovsky, and B.~L. De~Moor, ``A note on
  persistency of excitation,'' \emph{Systems \& Control Letters}, vol.~54,
  no.~4, pp. 325--329, 2005.

\bibitem{berberich2020data}
J.~Berberich, J.~K{\"o}hler, M.~A. M{\"u}ller, and F.~Allg{\"o}wer,
  ``Data-driven model predictive control with stability and robustness
  guarantees,'' \emph{IEEE Transactions on Automatic Control}, vol.~66, no.~4,
  pp. 1702--1717, 2020.

\bibitem{baros2022online}
S.~Baros, C.-Y. Chang, G.~E. Colon-Reyes, and A.~Bernstein, ``Online
  data-enabled predictive control,'' \emph{Automatica}, vol. 138, p. 109926,
  2022.

\bibitem{carlet_data-driven_2022}
P.~G. Carlet, A.~Favato, S.~Bolognani, and F.~D{\"o}rfler, ``Data-driven
  continuous-set predictive current control for synchronous motor drives,''
  \emph{IEEE Transactions on Power Electronics}, vol.~37, no.~6, pp.
  6637--6646, 2022.

\bibitem{elokda2021data}
E.~Elokda, J.~Coulson, P.~N. Beuchat, J.~Lygeros, and F.~D{\"o}rfler,
  ``Data-enabled predictive control for quadcopters,'' \emph{International
  Journal of Robust and Nonlinear Control}, vol.~31, no.~18, pp. 8916--8936,
  2021.

\bibitem{fiedler_relationship_2021}
F.~Fiedler and S.~Lucia, ``On the relationship between data-enabled predictive
  control and subspace predictive control,'' in \emph{2021 European Control
  Conference (ECC)}.\hskip 1em plus 0.5em minus 0.4em\relax IEEE, 2021, pp.
  222--229.

\bibitem{huang2019data}
L.~Huang, J.~Coulson, J.~Lygeros, and F.~D{\"o}rfler, ``Data-enabled predictive
  control for grid-connected power converters,'' in \emph{2019 IEEE 58th
  Conference on Decision and Control (CDC)}.\hskip 1em plus 0.5em minus
  0.4em\relax IEEE, 2019, pp. 8130--8135.

\bibitem{wegner_data-enabled_2021}
F.~Wegner, ``Data-enabled predictive control of robotic systems,'' Master's
  thesis, ETH Zurich, 2021.

\bibitem{zhang_dimension_2023}
K.~Zhang, Y.~Zheng, C.~Shang, and Z.~Li, ``Dimension reduction for efficient
  data-enabled predictive control,'' \emph{IEEE Control Systems Letters}, 2023.

\bibitem{carlet_real-time_2023}
P.~G. Carlet, A.~Favato, R.~Torchio, F.~Toso, S.~Bolognani, and F.~D{\"o}rfler,
  ``Real-time feasibility of data-driven predictive control for synchronous
  motor drives,'' \emph{IEEE Transactions on Power Electronics}, vol.~38,
  no.~2, pp. 1672--1682, 2022.

\bibitem{breschi2023data}
V.~Breschi, A.~Chiuso, and S.~Formentin, ``Data-driven predictive control in a
  stochastic setting: A unified framework,'' \emph{Automatica}, vol. 152, p.
  110961, 2023.

\bibitem{sader_causality-informed_2023}
M.~Sader, Y.~Wang, D.~Huang, C.~Shang, and B.~Huang, ``Causality-informed
  data-driven predictive control,'' \emph{arXiv preprint arXiv:2311.09545},
  2023.

\bibitem{coulson2021distributionally}
J.~Coulson, J.~Lygeros, and F.~D{\"o}rfler, ``Distributionally robust chance
  constrained data-enabled predictive control,'' \emph{IEEE Transactions on
  Automatic Control}, vol.~67, no.~7, pp. 3289--3304, 2021.

\bibitem{markovsky2006exact}
I.~Markovsky, J.~C. Willems, S.~Van~Huffel, and B.~De~Moor, \emph{Exact and
  approximate modeling of linear systems: A behavioral approach}.\hskip 1em
  plus 0.5em minus 0.4em\relax SIAM, 2006.

\bibitem{parikh2014proximal}
N.~Parikh, S.~Boyd \emph{et~al.}, ``Proximal algorithms,'' \emph{Foundations
  and trends{\textregistered} in Optimization}, vol.~1, no.~3, pp. 127--239,
  2014.

\bibitem{gregor2010learning}
K.~Gregor and Y.~LeCun, ``Learning fast approximations of sparse coding,'' in
  \emph{Proceedings of the 27th international conference on international
  conference on machine learning}, 2010, pp. 399--406.

\bibitem{monga2021algorithm}
V.~Monga, Y.~Li, and Y.~C. Eldar, ``Algorithm unrolling: Interpretable,
  efficient deep learning for signal and image processing,'' \emph{IEEE Signal
  Processing Magazine}, vol.~38, no.~2, pp. 18--44, 2021.

\bibitem{lu2023bridging}
Y.~Lu, Z.~Li, Y.~Zhou, N.~Li, and Y.~Mo, ``Mpc-inspired reinforcement learning
  for verifiable model-free control,'' \emph{arXiv preprint arXiv:2312.05332},
  2023.

\bibitem{eckstein1992douglas}
J.~Eckstein and D.~P. Bertsekas, ``On the douglas—rachford splitting method
  and the proximal point algorithm for maximal monotone operators,''
  \emph{Mathematical programming}, vol.~55, pp. 293--318, 1992.

\bibitem{johansson2000quadruple}
K.~H. Johansson, ``The quadruple-tank process: A multivariable laboratory
  process with an adjustable zero,'' \emph{IEEE Transactions on control systems
  technology}, vol.~8, no.~3, pp. 456--465, 2000.

\end{thebibliography}

\end{document}